% interactcadsample.tex
% v1.03 - April 2017

\documentclass[]{interact}

\usepackage{epstopdf}% To incorporate .eps illustrations using PDFLaTeX, etc.
\usepackage{hyperref}
%\usepackage{natbib}% Citation support using natbib.sty
%\bibpunct[, ]{(}{)}{;}{a}{}{,}% Citation support using natbib.sty
%\renewcommand\bibfont{\fontsize{10}{12}\selectfont}% Bibliography support using natbib.sty

\usepackage{footmisc} % enabling footnotes.
\usepackage{subcaption}

\usepackage{mathrsfs}
\usepackage{dsfont}

\theoremstyle{plain}% Theorem-like structures provided by amsthm.sty
\newtheorem{theorem}{Theorem}[section]
\newtheorem{lemma}[theorem]{Lemma}
\newtheorem{corollary}[theorem]{Corollary}
\newtheorem{proposition}[theorem]{Proposition}

\theoremstyle{definition}
\newtheorem{definition}[theorem]{Definition}
\newtheorem{example}[theorem]{Example}
\theoremstyle{assumption}
\newtheorem{assumption}[theorem]{Assumption}

\theoremstyle{remark}
\newtheorem{remark}{Remark}

\def\shuffle{\sqcup\mathchoice{\mkern-7mu}{\mkern-7mu}{\mkern-3.2mu}{\mkern-3.8mu}\sqcup}

\def\Sig{\mathrm{Sig}}
\usepackage{mathtools}
\usepackage{setspace}

\begin{document}

\articletype{ARTICLE TEMPLATE}

\title{Numerical method for model-free pricing of exotic derivatives using rough path signatures\footnote{Opinions expressed in this paper are those of the authors, and do not necessarily reflect the view of JP Morgan.
\\
\indent \indent The authors would like to thank Samuel Cohen and David Pr\"omel for their helpful insights towards the improvement of this paper.
\\
\indent \indent This work was supported by The Alan Turing Institute under the EPSRC grant EP/N510129/1.
}}

\author{
\name{Terry Lyons\textsuperscript{a, b}, Sina Nejad\textsuperscript{a, b} and Imanol Perez Arribas\textsuperscript{a, b, c}\thanks{CONTACT Imanol Perez Arribas. Email: imanol.perez@maths.ox.ac.uk.}}
\affil{\textsuperscript{a}Mathematical Institute, University of Oxford, UK; \textsuperscript{b}The Alan Turing Institute, London, UK; \textsuperscript{c}J.P. Morgan, London, UK}
}

\maketitle

\begin{abstract}
We estimate prices of exotic options in a discrete-time model-free setting when the trader has access to market prices of a rich enough class of exotic and vanilla options. This is achieved by estimating an unobservable quantity called ``implied expected signature'' from such market prices, which are used to price other exotic derivatives. The implied expected signature is an object that characterises the market dynamics.
\end{abstract}

\begin{keywords}
model-free pricing; rough path theory; financial derivatives
\end{keywords}

\onehalfspacing

\section{Introduction}
A framework that has often been considered in model-free finance (see \cite{acciaio, obloj, discrete_modelfree1, obloj2}) is that where the market contains a family of financial derivatives written on a risky asset, which can be statically traded. Under certain no-arbitrage assumptions which forbid making profits without taking any risk, one may be interested in studying whether it is possible to derive a unique price for other financial derivatives. In a model-free framework we do not impose a probability measure on the market dynamics -- the objective is to obtain prices of potentially illiquid contracts, without assuming any particular model for the dynamics of the risky asset.

The objective of this paper is to extend the ideas proposed in \cite{sina} to discrete-time, model-free frameworks. In \cite{sina}, the authors introduce a family of basic exotic derivatives (called \textit{signature payoffs}) that are used to price (or hedge) other exotic options. In this paper, we extend this approach to obtain a numerical method to price exotic derivatives from market data in discrete time in settings with minimal assumptions about the market. We illustrate our methodology with various numerical experiments that demonstrate  the feasibility of our approach.

It is well-known (\cite{putcall}) that if prices of call and put options of all strikes and a fixed maturity are known, one can derive the price of any European contingent claim of the same maturity by writing such contracts as linear combinations of call and put options. Signature options extend the idea of approximating complex contracts in terms of simpler, basic contracts to the setting of exotic path-dependent payoffs.

Our approach is the following: first, we replicate certain exotic derivatives using a family of primitive securities -- signature payoffs. Then, assuming we have access to market prices of a rich enough family of exotic derivatives, we infer the \textit{implied expected signature} -- an unobservable quantity that characterises the entire market dynamics. Finally, we use the inferred implied expected signature to price other exotic payoffs.

Exotic derivatives are typically illiquid and cannot in general observe the market price of a path-dependent exotic derivative. However, there are multiple data providers that offer consensus market prices of a range of OTC (over-the-counter) exotic derivatives -- examples include DeriveXperts Mercure\footnote{\url{http://www.derivexperts.com/services/mercure}} and Totem Markit\footnote{\url{https://ihsmarkit.com/products/totem.html}}. These prices reflect the consensus prices from market participants. Moreover, certain  FX exotic options such as double no-touch options are liquid and we can observe their market prices. Therefore, exploring the extent to which the market prices of these exotic options contain enough information to price other exotic options is a natural question to ask, and this is precisely the question we address.

This paper assumes a discrete-time market, where the investor is only allowed to trade at discrete trading times $\mathbb T:=\{t_i\}_{i=0}^n \subset [0, 1]$. Under certain assumptions on the financial market, such markets will have at least a \textit{risk-neutral measure} or \textit{equivalent martingale measure}: a probability measure under which the discounted risky asset is a martingale. However, except in very restrictive settings such risk-neutral measures are not unique. Feasible prices for a financial derivative are given by expectations of the payoff of the contract under risk-neutral measures. Under no-arbitrage assumptions, bounds on feasible prices are also given by the so-called super-hedging prices (\cite[Theorem 1.4]{acciaio}). It turns out that under certain assumptions on the market, both pricing approaches coincide -- i.e. the supremum of the expected payoff under risk-neutral measures coincides with the infimum super-hedging price (\cite{acciaio, obloj, discrete_modelfree1}). This is called the pricing-hedging duality.

Knowledge of the implied expected signature -- defined as the expected signature that matches the market prices of all observable financial derivatives -- is sufficient to price financial contracts. The expected signature of a stochastic process (\cite{ilya2, expected_sig_bm, ilya, hao}) is an object that plays a similar role to the moments of an $\mathbb R^d$-valued random variable. Under certain assumptions, the expected signature determines the law of the stochastic process (\cite{ilya, ilya2}), just like under some assumptions the moments of the $\mathbb R^d$-valued random variable will determine the law of the random variable. Moreover, knowing the implied expected signature is equivalent to knowing the prices of all signature payoffs. Therefore, the fact that the expected signature determines the law of the market dynamics suggests that the prices of all signature payoffs determine the market dynamics. Expected signatures have also been used in other areas of mathematical finance, such as in optimal execution \cite{optimal1, optimal2}.

The remainder of the paper is structured as follows. In Section \ref{sec:signature} we introduce the notion of the signature of a path, a crucial object in this paper. Section \ref{sec:framework} introduces the financial framework we will consider. In Section \ref{sec:pricing} we introduce signature payoffs and we show how they are used to numerically estimate prices of other exotic derivatives. Finally, in Section \ref{sec:numerical experiment} we show the feasibility of our numerical method with numerical experiments. All proofs have been postponed to Appendix \ref{appendix:proofs}.

%%%%%%%%%%%%%%%%%%%%%%%%%%

%The paper is structured as follows. In Section \ref{sec:signature} we introduce the notion of the \textit{signature} of a path. Then, in Section \ref{sec:framework}, we introduce the framework the paper will be based on, and we state the assumptions we will impose on the market. In Section \ref{sec:pricing} we introduce signature payoffs, which are then used to price other exotic derivatives. In Section \ref{subsec:implied} we explain how these results can be used in practice when one has access to market prices of exotic payoffs. In Section \ref{sec:hedging} we address the problem of hedging exotic derivatives in discrete time, where we use a certain property of signatures (namely, the shuffle product property) to transform a mean-variance optimal hedging problem into a numerically feasible optimisation problem. Finally, in Section \ref{sec:numerical experiment} we demonstrate the feasibility of our methodology in a numerical experiment where we price and hedge exotic derivatives from exogenously given prices of other exotic options. 

\section{Signature of a path}\label{sec:signature}

Consider the $d$-dimensional Euclidean space $\mathbb R^d$. Denote by $(\mathbb R^d)^{\otimes n} := \underbrace{\mathbb R^d \otimes \ldots \otimes \mathbb R^d}_{n}$, where $\otimes$ is the tensor product. We define the \textit{extended tensor algebra} over $\mathbb R^d$ as
$$T((\mathbb R^d)) := \{(a_0, a_1, \ldots, a_n, \ldots) : a_n \in (\mathbb R^d)^{\otimes n}\mbox{ for all }n\geq 0\}.$$ Similarly, we denote \begin{align}\label{eq:ta}
&T(\mathbb R^d) := \{ (a_0, a_1, \ldots, a_n, \ldots) \in T((\mathbb R^d)) : \exists N\in \mathbb N \mbox{ s.t. }a_n=0\mbox{ for all }n\geq N\},\\
&T^N(\mathbb R^d) := \{ (a_0, a_1, \ldots, a_n, \ldots) \in T((\mathbb R^d)) : a_n=0 \mbox{ for all }n\geq N\}
\end{align} which we call the \textit{tensor algebra} and \textit{truncated tensor algebra at $N\in \mathbb N$}, respectively. All three are algebras with the product $\otimes$ and sum $+$.

Denote by $(\mathbb R^d)^\ast$ the dual space of $\mathbb R^d$. Define the projection $\mathbf{1}^\ast\in T((\mathbb R^d)^\ast)$ by $\langle \mathbf 1^\ast, \mathbf a\rangle := a_0$ for all $\mathbf a = (a_0, a_1, \ldots)\in T((\mathbb R^d))$. Let $\{e_1, \ldots, e_d\}$ be a basis for $\mathbb R^d$, and $\{e_1^\ast, \ldots, e_d^\ast\}$ the corresponding dual basis for $(\mathbb R^d)^\ast$. Then, $$\{e_I^\ast := e_{i_1}^\ast \otimes \ldots \otimes e_{i_n}^\ast \;|\; I=(i_1, \ldots, i_n) \in \{1, \ldots, d\}^n, n\in \mathbb N\}\cup \{\mathbf{1}^\ast\}$$ is a basis for $T((\mathbb R^d))^\ast \cong T((\mathbb R^d)^\ast)$. There is a certain product that can be defined on $T((\mathbb R^d)^\ast)$, called the \textit{shuffle product}:

\begin{definition}\label{def:shuffle}
Let $I=(i_1, \ldots, i_n)\in \{1, \ldots, d\}^n$, $J=(j_1, \ldots, j_m)\in \{1, \ldots, d\}^m$. The shuffle product of $e_I^\ast$ and $e_J^\ast$, denoted by $\phantom{ }\shuffle$, is defined inductively by:
$$e_I^\ast \shuffle\; e_J^\ast:= (e_{(i_1, \ldots, i_{n-1})}^\ast \shuffle\; e_J^\ast)\otimes e_{i_n}^\ast + (e_I^\ast \shuffle\; e_{(j_1, \ldots, j_{m-1})}^\ast) \otimes e_{j_m}^\ast,$$ and $e_I^\ast \shuffle\; \mathbf{1}^\ast = \mathbf{1}^\ast \shuffle\; e_I^\ast = e_I^\ast$. The operation is extended by bilinearity to $T((\mathbb R^d)^\ast)$.
\end{definition}

The signature of a path with bounded variation is a $T((\mathbb R^d))$-valued object, defined as follows:

\begin{definition}[Signature of a path]

Given a path $Z\in C([0, T]; \mathbb R^d)$ with bounded variation, we denote its signature on $[s, t] \subset [0, T]$ by

$$\Sig(Z)_{s,t} := (1, Z^1_{s,t}, Z^2_{s,t}, \ldots, Z^n_{s,t}, \ldots) \in T((\mathbb R^d))$$ where
\begin{equation}\label{eq:iterated integral}
Z^n_{s,t} := \int_{s<u_1<\ldots<u_n<t}dZ_{u_1}\otimes \ldots \otimes dZ_{u_n} \in (\mathbb R^d)^{\otimes n}.
\end{equation}
Similarly, the truncated signature of order $N\in \mathbb N$ is denoted by

$$\Sig^N(Z)_{s,t} := (1, Z^1_{s,t}, \ldots, Z^N_{s,t})\in T^N(\mathbb R^d).$$ When we refer to the \textit{signature of $Z$} without specifying the interval $[s, t]$, we will implicitly refer to the signature on $[0, T]$, i.e. $\Sig(Z)_{0,T}$.
\end{definition}

A thorough study of the signature of a path is beyond the scope of this paper -- we refer the reader to \cite{lyonsbook} for a more detailed view of signatures and rough paths. We state the following key result; it guarantees that linear functions on the signature form an algebra.

\begin{proposition}{[\cite[Theorem 2.15]{lyonsbook}]}\label{prop:shuffle}
Let $Z:[0, T] \to \mathbb R^d$ be a continuous path with bounded variation. Let $\ell_1, \ell_2\in T((\mathbb R^d)^\ast)$ be two linear functionals. Then,

$$\langle \ell_1, \Sig(Z)_{s,t}\rangle \langle \ell_2, \Sig(Z)_{s,t}\rangle = \langle \ell_1 \shuffle\; \ell_2, \Sig(Z)_{s,t}\rangle \quad \forall\, 0\leq s < t \leq T.$$
\end{proposition}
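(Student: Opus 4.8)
The plan is to reduce to basis functionals by bilinearity and then induct on the total length of the two words. Both sides of the claimed identity are bilinear in the pair $(\ell_1, \ell_2)$ — the left-hand side because $\langle\cdot, \Sig(Z)_{s,t}\rangle$ is linear and we are taking a product, the right-hand side because $\shuffle$ is extended by bilinearity in Definition \ref{def:shuffle} — so it suffices to prove the identity for $\ell_1 = e_I^\ast$ and $\ell_2 = e_J^\ast$ with $I = (i_1, \ldots, i_n)$ and $J = (j_1, \ldots, j_m)$. I would then argue by induction on $n+m$. For the base case, if $n = 0$ or $m = 0$ then one of the functionals is $\mathbf 1^\ast$, so $\langle \mathbf 1^\ast, \Sig(Z)_{s,t}\rangle = 1$ while $e_I^\ast \shuffle\; \mathbf 1^\ast = e_I^\ast$, and the two sides coincide immediately.

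The heart of the argument is the observation that appending a letter to a word corresponds to Stieltjes integration against the corresponding coordinate of $Z$. Writing $F(u) := \langle e_I^\ast, \Sig(Z)_{s,u}\rangle$ and $G(u) := \langle e_J^\ast, \Sig(Z)_{s,u}\rangle$, the definition of the iterated integral in \eqref{eq:iterated integral} gives
$$F(u) = \int_s^u \langle e_{(i_1, \ldots, i_{n-1})}^\ast, \Sig(Z)_{s,r}\rangle\, dZ^{i_n}_r,$$
so that $dF = \langle e_{(i_1, \ldots, i_{n-1})}^\ast, \Sig(Z)_{s,\cdot}\rangle\, dZ^{i_n}$, and analogously for $dG$. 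Since $Z$ has bounded variation, $F$ and $G$ are continuous and of bounded variation with $F(s) = G(s) = 0$ (both words have length at least one in the inductive step), so the integration-by-parts formula for Stieltjes integrals yields
$$F(t)G(t) = \int_s^t F(u)\, dG(u) + \int_s^t G(u)\, dF(u).$$

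To finish, I would substitute the differentials and apply the inductive hypothesis pointwise in the upper limit $u$ before integrating. Each resulting product has total word length $n + m - 1$, so the induction applies: the first integral becomes
$$\int_s^t \langle e_I^\ast \shuffle\; e_{(j_1, \ldots, j_{m-1})}^\ast, \Sig(Z)_{s,u}\rangle\, dZ^{j_m}_u = \langle (e_I^\ast \shuffle\; e_{(j_1, \ldots, j_{m-1})}^\ast) \otimes e_{j_m}^\ast, \Sig(Z)_{s,t}\rangle,$$
and the second becomes $\langle (e_{(i_1, \ldots, i_{n-1})}^\ast \shuffle\; e_J^\ast) \otimes e_{i_n}^\ast, \Sig(Z)_{s,t}\rangle$. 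Their sum is precisely $\langle e_I^\ast \shuffle\; e_J^\ast, \Sig(Z)_{s,t}\rangle$ by the recursive Definition \ref{def:shuffle}, which closes the induction.

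The main obstacle is not conceptual but bookkeeping: one must justify applying the inductive hypothesis inside the integral (legitimate, since it is an identity of functions of $u$ holding for every fixed $u$) and then match the two summands produced by integration by parts to the two summands in the recursive definition of the shuffle, so that the trailing letters $e_{i_n}^\ast$ and $e_{j_m}^\ast$ land in the correct tensor slots. Care is also needed to invoke the integration-by-parts formula in the correct form for a continuous bounded-variation integrator, where no quadratic-covariation correction term arises.
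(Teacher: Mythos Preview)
Your argument is correct and is the standard proof of the shuffle identity: bilinear reduction to basis words, induction on total length, integration by parts for Stieltjes integrals, and matching the two resulting summands against the recursive Definition~\ref{def:shuffle}. Note, however, that the paper does not supply its own proof of this proposition --- it is quoted verbatim as \cite[Theorem~2.15]{lyonsbook} and invoked as a black box in the proof of Proposition~\ref{prop:payoffs local} --- so there is nothing in the paper to compare against; what you have written is essentially the argument one finds in that reference.
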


\section{Framework}\label{sec:framework}

\subsection{The market}

Let $T>0$ and let $\mathbb T=\{t_i\}_{i=0}^n \subset [0, 1]$ with $0=t_0<\ldots<t_n=1$ be the trading times. We assume that the market consists of a single risky asset, although all results generalise to the multi-asset case. The space of (discounted) price paths is defined as $\Omega:=\{X:\mathbb T \to \mathbb R_+ : X_0 = 1\}$. Therefore, we assume that the initial price is normalised to 1.

Traders make decisions based on past information and the outcome of these decisions is known as new information is revealed. Thus, we transform price paths to incorporate past and future information. The transformation, known as the \textit{lead-lag transformation} of a path, is studied in \cite{guy} and is defined below.

\begin{definition}[Lead-lag transformation]\label{def:leadlag}
Given a price path $X\in \Omega$, define its lead-lag transformation $\widehat X:[0,1]\to \mathbb R^2 \oplus \mathbb R$ by the continuous path given by
% $\widehat{X}_t = (X_t^b, X_t^f)$, where
\begin{align*}
&\widehat{X}_{2k/2n} := \big ( (t_k, X_{t_k}), X_{t_k}\big )\in \mathbb R^2\oplus \mathbb R,\\
&\widehat{X}_{(2k+1)/2n} := \big ( (t_k, X_{t_k}), X_{t_{k+1}}\big )\in \mathbb R^2\oplus \mathbb R,
\end{align*} and linear interpolation in between. For each $t\in [0, 1]$, we write $\widehat X_t = (X_t^b, X_t^f)$ where $X_t^b\in \mathbb R^2$ denotes the \textit{lag} (backward) component and $X_t^f\in \mathbb R$ denotes the \textit{lead} (forward) component. We define the space of lead-lag price paths $\widehat \Omega := \{ \widehat X:X\in \Omega\}$. Paths in $\widehat \Omega$ are continuous and have bounded variation, and hence $\widehat \Omega$ is a subspace of $BV([0, 1]; \mathbb R^2\oplus \mathbb R)$, so that $\widehat \Omega$ is equipped with the norm $\lVert \cdot \rVert_{BV}$. Similarly, for $t\in [0, 1]$ we define $\widehat \Omega_t := \{ \widehat X|_{[0, t]} : \widehat X \in \widehat \Omega\}$. We denote by $\pi^b$ and $\pi^f$ the projections $\pi^b:\mathbb R^2\oplus \mathbb R\to \mathbb R^2$ and $\pi^f:\mathbb R^2\oplus \mathbb R\to \mathbb R$ onto the lag and lead components, respectively.
\end{definition}

\begin{figure}
\includegraphics[width=\linewidth]{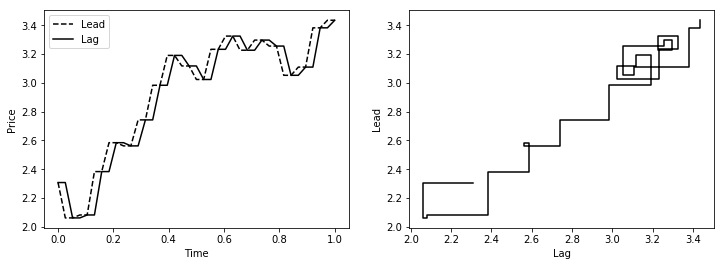}
\caption{Lead-lag transformation of a price path. The figure on the left shows the lead and lag components of the path, and the figure on the right shows the lag component plotted against the lead component.}
\label{fig:leadlag}
\end{figure}

Figure \ref{fig:leadlag} shows the lead-lag transformation of a certain price path. As the name suggests, the lead component (the future) is \textit{leading} the lag component (the past).

Notice that because all paths in $\widehat \Omega$ are piecewise linear, $(\widehat \Omega, \lVert \cdot \rVert_{BV})$ is separable. We consider the measure space $(\widehat \Omega, \mathcal B(\widehat \Omega))$.

It turns out that the signature of lead-lag transformed paths uniquely determines the path:

\begin{lemma}[Uniqueness of signature, \cite{uniqueness}]\label{lemma:uniqueness}
The signature map $\Sig:\widehat \Omega \to T((\mathbb R^2\oplus \mathbb R))$ is injective.
\end{lemma}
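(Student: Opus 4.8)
The plan is to exploit the special structure of the lead-lag transform, since the signature map is far from injective on general bounded-variation paths: it is invariant under reparametrisation and, by the Hambly--Lyons theorem, annihilates exactly the tree-like pieces. Because $X \mapsto \widehat X$ is visibly injective, it suffices to reconstruct the price path $X$, i.e. the values $X_{t_0}, \ldots, X_{t_n}$, from $\Sig(\widehat X)_{0,1}$. The key observation is that the known, strictly increasing trading times $t_k$ sit inside the path as the first coordinate of the lag component, so that the lag projection $\pi^b(\widehat X)$ is essentially a graph over time; I would use this monotone coordinate to turn the reconstruction into a classical moment problem.

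Concretely, write $\widehat X = (X^b, X^f)$ with $X^b = (a, b)$, where $a$ is the time coordinate (running from $0$ to $1$) and $b$ is the lagged price, so that $b$ passes through $X_{t_0}, \ldots, X_{t_n}$ while $a$ moves only on the even sub-intervals. For each $k \geq 0$ I would evaluate the linear functional $(e_1^\ast)^{\otimes k}\otimes e_2^\ast$ on the signature. Using that the $a$-coordinate is one-dimensional, so its iterated integral up to any time is a pure power, a direct computation gives
\begin{equation*}
\langle (e_1^\ast)^{\otimes k}\otimes e_2^\ast, \Sig(\widehat X)_{0,1}\rangle = \frac{1}{k!}\int_0^1 \theta^k \, d\bar f(\theta),
\end{equation*}
where $\bar f$ is the continuous piecewise-linear interpolant of the data $(t_k, X_{t_k})$. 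In other words, the signature encodes all moments of the signed measure $d\bar f$ on the compact interval $[0,1]$.

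Since polynomials are dense in $C([0,1])$, a finite signed measure on $[0,1]$ is determined by its moments; hence $d\bar f$, and therefore $\bar f$ up to an additive constant, is recovered from $\Sig(\widehat X)$. The constant is fixed by the normalisation $X_0 = 1$ (equivalently, by the first level of the signature), so $\bar f$ is fully determined, and evaluating at the known times yields $X_{t_k} = \bar f(t_k)$ for every $k$. Thus two lead-lag paths with the same signature share the same price path, which is injectivity.

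The main obstacle is conceptual rather than computational: one must explain why the lead-lag construction defeats the generic non-injectivity of the signature. The resolution is precisely that time enters as a genuine strictly increasing coordinate, so the lag projection is a graph and admits no tree-like cancellation; once this is in place, the only remaining technical points are the explicit iterated-integral identity above (most transparently obtained segment by segment via Chen's identity, each linear piece contributing an exponential) and the determinacy of measures on $[0,1]$ by their moments. I would present the moment identity as the heart of the argument, since it simultaneously disposes of reparametrisation invariance (the moments are parametrisation-independent) and pins down the finitely many unknowns $X_{t_k}$.
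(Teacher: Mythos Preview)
Your argument is correct and self-contained. The paper itself does not prove this lemma; it simply attributes it to Hambly--Lyons, i.e.\ invokes the general result that the signature of a bounded-variation path determines the path up to tree-like equivalence, and implicitly relies on the monotone time coordinate to rule out tree-like pieces and pin down the parametrisation within $\widehat\Omega$.

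Your route is genuinely different and in fact more elementary for this specific setting. Rather than appealing to the full Hambly--Lyons machinery, you extract only the coordinates $\langle (e_1^\ast)^{\otimes k}\otimes e_2^\ast,\Sig(\widehat X)\rangle$, identify them with the moments $\tfrac{1}{k!}\int_0^1\theta^k\,d\bar f(\theta)$ of the piecewise-linear interpolant of $(t_k,X_{t_k})$, and then use moment-determinacy on $[0,1]$ to recover $\bar f$ and hence the finitely many values $X_{t_k}$. This buys you a short, explicit reconstruction that uses only Stone--Weierstrass on $[0,1]$, whereas the cited result is a deep theorem about general BV paths. One minor quibble: the parenthetical ``equivalently, by the first level of the signature'' is not quite right---the first level gives only the increment $X_1-X_0$, not the initial value---so the additive constant is genuinely fixed by the normalisation $X_0=1$ in the definition of $\Omega$, not by the signature itself. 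This does not affect the argument.
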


This result is not true if we consider the signature map $\Sig:\Omega \to T((\mathbb R))$. In other words, given a path $X\in \Omega$, its signature $\Sig(X)$ does not uniquely determine $X$ itself, but if its lead-lag transformation $\widehat X\in \widehat \Omega$ is considered, then $\Sig(\widehat X)$ \textit{does} determine uniquely $\widehat X$ (and hence $X$). This is the main reason why lead-lag transformed paths are considered in this paper.

\subsection{Payoffs and no-arbitrage assumptions}

Here, we consider exotic, path-dependent payoffs. More specifically, we define payoffs as follows:

\begin{definition}[Payoff] A payoff is a Borel-measurable function $G:\widehat \Omega \to \mathbb R$.
\end{definition}

The following are the assumptions we impose on the market.

\begin{assumption}\label{assumption}
Let $\mathcal M$ denote the set of all risk-neutral measures. We assume that the market satisfies the following assumptions:
\begin{enumerate}
\item There is a finite family of payoffs $\mathfrak F =\{F_i\}_{i}$ that can be bought at $t=0$, where $F_i:\widehat \Omega \to \mathbb R$.
\item There are no market frictions such as transaction costs,  slippage and bid-ask spread.
\item $\mathcal M \neq \varnothing$.
\item $\mathcal M$ is tight.
\item $\mathbb E^{\mathbb Q} [\Sig^{N}(\widehat X)_{0,1}]$ is well-defined for all $\mathbb Q\in \mathcal M$ and $N\in \mathbb N$.
\end{enumerate}
\end{assumption}

The assumption that $\mathbb E^{\mathbb Q} [\Sig^{N}(\widehat X)_{0,1}]$ is well-defined for all $\mathbb Q\in \mathcal M$ and $N\in \mathbb N$ is a mild, nonrestrictive assumption that is considered for technical reasons. It is a path-space analogous of the ``moments of all order exist'' assumption for finite dimensional random variables.

\begin{definition}
We denote by $\mathcal P$ the pricing map, given by $\mathcal{P}(G) := \sup_{\mathbb  Q \in \mathcal M} \mathbb E^{\mathbb Q} [G(\widehat X)]$ for each payoff $G$.
\end{definition}

Let $F\in \mathfrak F$. Notice that the mapping $\mathcal M\to \mathbb R$ given by $\mathbb Q\mapsto \mathbb E^{\mathbb Q}[F(\widehat X)]$ is constant, so that $\mathcal P (F)=\mathbb E^{\mathbb Q}[ F(\widehat X)]$ for any fixed $\mathbb Q\in \mathcal M$.

\subsection{Trading strategies}

Trading strategies are investment policies that determine how much of the asset should be held at each time. Moreover, the decision has to be made based on the information about the past only. In our framework, this means that a trading strategy is a real-valued function acting on the lag component of the lead-lag transformation. We make this more precise by introducing the space $\Lambda$ -- see below. A similar space is discussed in \cite{rama2, rama1, bookstopped, rama3} and in \cite{dupire, promel, riga} in the context of finance.

\begin{definition}[Trading strategies]
We denote $\Lambda := \bigcup_{t\in [0, 1]} \pi^b(\widehat\Omega_t)$, where $\pi^b$ is the projection onto the lag component defined in Definition \ref{def:leadlag}. This is a metric space with a certain metric (\cite[Section 2.2]{rama2}). The space of trading strategies is defined as $\mathcal T := C(\Lambda; \mathbb R)$.
\end{definition}

Intuitively, the space $\mathcal T$ consists of all non-anticipative processes; trading should only be done with past information because no information about the future can be used.

\begin{remark}
The space $\Lambda$ is a Polish space (see \cite{bookstopped}).
\end{remark}

The outcome of a trading strategy $\theta\in \mathcal T$ depends on the trajectory of the price path $\widehat X\in \widehat \Omega$. The gains of the trading strategy $\theta$ are defined below.

\begin{definition}
For $\theta \in \mathcal T$ and $X\in \widehat \Omega$, the gains of the trading strategy $\theta$ are given by

$$(\theta \bullet  \widehat X)_{\mathbb T} := \sum_k \theta(X^b|_{[0, t_k]})(X_{t_{k+1}} - X_{t_k}).$$
\end{definition}

In this paper we allow semi-static and dynamic hedging. Recall that $\mathfrak{F}$ is the family of payoffs that can be bought at $t=0$ (see Assumption \ref{assumption}). The performance of the hedging strategy will be determined by three factors: the price that is paid for each of the payoffs in $\mathfrak F$, the gains or losses that stem from holding the payoffs in $\mathfrak F$, and the gains that come from dynamically trading the underlying asset.

\begin{definition}\label{def:hedging strategy}
A hedging strategy is a pair $h=((a_F)_{F\in \mathfrak F}, \theta)$, where $a_F\in \mathbb R$ for all $F\in \mathfrak F$ and $\theta \in \mathcal T$. The Profit and Loss (P\& L) of $h$ is defined by
\begin{align*}
V_h: \widehat \Omega &\to \mathbb R\\
\widehat X &\mapsto \sum_{F\in \mathfrak F} a_F (F(\widehat X) - \mathcal P(F)) + (\theta \bullet X)_{\mathbb T}.
\end{align*}
We denote by $\mathcal H$ the space of hedging strategies. A payoff $G:\widehat \Omega\to \mathbb R$ is said to be attainable if there exists a hedging strategy $h\in \mathcal H$ such that $G\equiv V_h$ on $\widehat \Omega$. We denote the space of all attainable payoffs by $\mathcal A$. Similarly, we define, for $\varepsilon> 0 $, $\mathcal A_\varepsilon := \{G:\widehat \Omega \to \mathbb R \mbox{ payoff } |\; \exists H\in \mathcal A\mbox{ such that }\lVert G - H \rVert_{L^\infty(\widehat \Omega)}<\varepsilon\}$ the space of almost attainable payoffs.
\end{definition}

\section{Signature payoffs}\label{sec:pricing}

We will now define a special class of continuous payoffs -- the family of signature payoffs. These were first introduced in \cite{sina, imanol}, and as Arrow-Debreu securities \cite{arrow, debreu}, they are \textit{primitive securities} in the sense that path-dependent exotic derivatives can be approximated by these signature payoffs.

\begin{definition}[Signature payoff]
Let $\ell \in T((\mathbb R^2\oplus \mathbb R)^\ast)$. The signature payoff $\mathcal S^\ell$ is defined by
\begin{align*}
\mathcal{S}^\ell : \widehat {\Omega} & \to \mathbb R\\
\widehat X &\mapsto \langle \ell, \Sig(\widehat X)_{0, 1}\rangle.
\end{align*}
\end{definition}

In other words, a signature payoff is a financial derivative whose payoff is given as a linear combination of certain iterated integrals. Because we are using iterated integrals against the price path, the family of all signature payoffs effectively contains all possible dynamic hedging strategies.

The following Proposition, see \cite{sina}, shows that signature payoffs are primitive securities. Indeed, the proposition establishes that continuous payoffs can be locally approximated by signature payoffs:

\begin{proposition}\label{prop:payoffs local}
Let $G:\widehat \Omega\to \mathbb R$ be a continuous payoff, and let $\mathcal K\subset \widehat \Omega$ be compact. Given any $\varepsilon > 0$, there exists a signature payoff $\mathcal S^\ell$ with $\ell\in T((\mathbb R^2\oplus \mathbb R)^\ast)$ such that
$$|G(\widehat X)-\mathcal S^\ell(\widehat X)|<\varepsilon\quad \forall \,\widehat X \in \mathcal K.$$
\end{proposition}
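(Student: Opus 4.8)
The plan is to recognise this as a Stone--Weierstrass density statement. Writing $\mathcal A_{\Sig} := \{\mathcal S^\ell|_{\mathcal K} : \ell \in T((\mathbb R^2 \oplus \mathbb R)^\ast)\}$ for the collection of signature payoffs restricted to $\mathcal K$, I would show that $\mathcal A_{\Sig}$ is a subalgebra of $C(\mathcal K; \mathbb R)$ that contains the constants and separates points, and then invoke the Stone--Weierstrass theorem to conclude that $\mathcal A_{\Sig}$ is uniformly dense in $C(\mathcal K; \mathbb R)$. Since $\mathcal K$ is compact and $G$ is continuous, $G|_{\mathcal K} \in C(\mathcal K; \mathbb R)$, and density produces the required $\ell$.

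First I would check that each $\mathcal S^\ell$ is continuous on $\widehat \Omega$, so that $\mathcal A_{\Sig} \subset C(\mathcal K; \mathbb R)$ genuinely holds. A fixed $\ell$ involves only finitely many tensor levels, and the truncated signature map $\widehat X \mapsto \Sig^N(\widehat X)_{0,1}$ is continuous from $(\widehat \Omega, \lVert \cdot \rVert_{BV})$ into $T^N(\mathbb R^2 \oplus \mathbb R)$ by the standard local-Lipschitz estimate for iterated integrals of bounded-variation paths; composing with the continuous linear functional $\langle \ell, \cdot \rangle$ gives continuity of $\mathcal S^\ell$.

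Next I would verify the algebraic structure. Linearity of $\ell \mapsto \langle \ell, \Sig(\widehat X)_{0,1}\rangle$ makes $\mathcal A_{\Sig}$ a linear subspace, and it contains the constants since $\langle c\,\mathbf 1^\ast, \Sig(\widehat X)_{0,1}\rangle = c$ for every $c \in \mathbb R$. Closure under pointwise multiplication is precisely the content of Proposition \ref{prop:shuffle}: for $\ell_1, \ell_2 \in T((\mathbb R^2 \oplus \mathbb R)^\ast)$ we have $\mathcal S^{\ell_1}(\widehat X)\, \mathcal S^{\ell_2}(\widehat X) = \langle \ell_1 \shuffle\; \ell_2, \Sig(\widehat X)_{0,1}\rangle = \mathcal S^{\ell_1 \shuffle\, \ell_2}(\widehat X)$, so the product of two signature payoffs is again a signature payoff. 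Hence $\mathcal A_{\Sig}$ is a subalgebra of $C(\mathcal K; \mathbb R)$.

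For the separation of points I would use Lemma \ref{lemma:uniqueness}: if $\widehat X, \widehat Y \in \mathcal K$ with $\widehat X \neq \widehat Y$, then injectivity of the signature map gives $\Sig(\widehat X)_{0,1} \neq \Sig(\widehat Y)_{0,1}$, so these two elements of $T((\mathbb R^2\oplus\mathbb R))$ differ in at least one basis coordinate $e_I^\ast$; taking $\ell = e_I^\ast$ yields $\mathcal S^\ell(\widehat X) \neq \mathcal S^\ell(\widehat Y)$. The main obstacle lies not in any single step but in correctly assembling these ingredients, and in particular in pinning down the continuity claim so that Stone--Weierstrass is genuinely applicable on $\mathcal K$. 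Once the hypotheses are in place, the theorem delivers a signature payoff with $\lVert G - \mathcal S^\ell \rVert_{L^\infty(\mathcal K)} < \varepsilon$, which is the assertion.
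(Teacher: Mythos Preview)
Your proposal is correct and follows essentially the same route as the paper: verify that signature payoffs form a point-separating subalgebra of $C(\mathcal K;\mathbb R)$ containing the constants (using the shuffle product identity of Proposition~\ref{prop:shuffle} for closure under multiplication and Lemma~\ref{lemma:uniqueness} for separation), then apply Stone--Weierstrass. Your explicit continuity check for $\mathcal S^\ell$ is a detail the paper leaves implicit, but otherwise the arguments coincide.
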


We have the following immediate corollary.

\begin{corollary}\label{cor:payoffs local}
Let $G:\widehat \Omega \to \mathbb R$ be a continuous payoff. Let $\varepsilon > 0$. Then, there exists a compact $\mathcal K_\varepsilon\subset \widehat \Omega$ and a signature payoff $\mathcal S^\ell$ with $\ell\in T((\mathbb R^2\oplus \mathbb R)^\ast)$ such that
\begin{enumerate}
\item $\mathbb Q(\widehat X\in \mathcal K_\varepsilon) > 1-\varepsilon$ for all risk-neutral measures $\mathbb Q\in \mathcal M$,
\item $|G(\widehat X)-\mathcal S^\ell(\widehat X)|<\varepsilon$ for all $\widehat X \in \mathcal K_\varepsilon$.
%\item $\mathbb E^{\mathbb Q} [F(\widehat X) \; ; \; \mathcal K_\varepsilon^c ] < \varepsilon$ for all $F\in \mathscr F$ and $\mathbb Q\in \mathcal M$.
\end{enumerate}
\end{corollary}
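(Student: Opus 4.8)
**

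The goal is to upgrade Proposition~\ref{prop:payoffs local}, which gives uniform approximation on an \emph{arbitrary} fixed compact set, into Corollary~\ref{cor:payoffs local}, where the single compact set $\mathcal K_\varepsilon$ simultaneously carries at least $1-\varepsilon$ of the mass under \emph{every} risk-neutral measure $\mathbb Q\in\mathcal M$. The plan is to produce such a ``universal'' compact set first, and then simply invoke Proposition~\ref{prop:payoffs local} on it.

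\textbf{Producing $\mathcal K_\varepsilon$ from tightness.}
First I would use Assumption~\ref{assumption}(4), the tightness of $\mathcal M$. By definition of tightness of a family of measures on the Polish space $(\widehat\Omega,\lVert\cdot\rVert_{BV})$ (recall $\widehat\Omega$ is separable, as noted after Definition~\ref{def:leadlag}), for the given $\varepsilon>0$ there exists a compact set $\mathcal K_\varepsilon\subset\widehat\Omega$ such that
\[
\mathbb Q(\widehat X\in\mathcal K_\varepsilon)>1-\varepsilon\qquad\text{for all }\mathbb Q\in\mathcal M.
\]
This is precisely item~(1), and it is where essentially all the content lies: tightness is exactly the hypothesis that a single compact set can be chosen uniformly over the whole family $\mathcal M$. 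Here one should check the technical point that the relevant $\sigma$-algebra is $\mathcal B(\widehat\Omega)$ so that $\{\widehat X\in\mathcal K_\varepsilon\}$ is measurable and the probabilities make sense; this is immediate since $\mathcal K_\varepsilon$ is closed.

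\textbf{Applying the local approximation.}
With $\mathcal K_\varepsilon$ now fixed and compact, I would apply Proposition~\ref{prop:payoffs local} to the continuous payoff $G$ with this compact set and the same tolerance $\varepsilon$. The proposition yields a linear functional $\ell\in T((\mathbb R^2\oplus\mathbb R)^\ast)$ whose signature payoff satisfies
\[
|G(\widehat X)-\mathcal S^\ell(\widehat X)|<\varepsilon\qquad\text{for all }\widehat X\in\mathcal K_\varepsilon,
\]
which is exactly item~(2). Since both items now hold for the same $\mathcal K_\varepsilon$ and the same $\mathcal S^\ell$, the corollary follows.

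\textbf{Main obstacle.}
There is no genuine analytic difficulty beyond correctly marshalling the hypotheses; the proof is a two-line consequence of tightness plus Proposition~\ref{prop:payoffs local}. The only point requiring care is conceptual: one must resist choosing the compact set to depend on a particular measure. The \emph{uniform} character of $\mathcal K_\varepsilon$ over all of $\mathcal M$ is not something one proves by hand but is handed to us directly by the tightness assumption, so the entire weight of the corollary rests on invoking Assumption~\ref{assumption}(4) before, rather than after, selecting $\ell$. Once the quantifiers are ordered this way --- tightness first to fix $\mathcal K_\varepsilon$, then Proposition~\ref{prop:payoffs local} to fix $\ell$ --- the result is immediate.
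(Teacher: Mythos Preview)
Your proposal is correct and follows exactly the paper's own proof: first invoke the tightness assumption on $\mathcal M$ to obtain a single compact set $\mathcal K_\varepsilon$ with $\mathbb Q(\widehat X\in\mathcal K_\varepsilon)>1-\varepsilon$ for all $\mathbb Q\in\mathcal M$, then apply Proposition~\ref{prop:payoffs local} on that compact set to obtain the approximating signature payoff. There is nothing to add.
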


The corollary states that there exists a \textit{large} compact set -- large in the sense that with very high probability, the price path will belong to the compact set -- such that on the compact set the exotic derivative is very close to a signature payoff.

Moreover, prices of payoffs can be well-approximated using signature payoffs:

\begin{proposition}\label{prop:approximate price}
Set $\varepsilon > 0$. Let $G:\widehat \Omega\to \mathbb R$ be a $\mathbb Q$-integrable payoff for all $\mathbb Q\in \mathcal M$. Assume $G$ is either in $\mathcal A_{\varepsilon/4}$ (i.e. almost attainable) or bounded. Then, there exists a compact $\mathcal K_\varepsilon \subset \widehat \Omega$ and a signature payoff $\mathcal S^\ell$ such that, for $\mathcal L := \mathds 1_{\mathcal{K}_\varepsilon} \mathcal S^{\ell}:\widehat \Omega\to\mathbb R$,

\begin{enumerate}
\item $\mathbb Q(\widehat X\in \mathcal K_\varepsilon) > 1-\varepsilon$ for all $\mathbb Q\in \mathcal M$.
\item $\mathcal P(|G-\mathcal L|) = \sup_{\mathbb Q\in \mathcal M} \mathbb E^{\mathbb Q} [|G(\widehat X)-\mathcal{L}(\widehat X)|] < \varepsilon$.
\end{enumerate}

\end{proposition}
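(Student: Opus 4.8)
The plan is to exploit the decomposition of the error induced by the indicator defining $\mathcal L$. Since $\mathcal L=\mathds 1_{\mathcal K_\varepsilon}\mathcal S^\ell$ coincides with $\mathcal S^\ell$ on $\mathcal K_\varepsilon$ and vanishes off it, for every $\mathbb Q\in\mathcal M$ we have
\begin{equation*}
\mathbb E^{\mathbb Q}\big[\,|G-\mathcal L|\,\big]=\mathbb E^{\mathbb Q}\big[\,|G-\mathcal S^\ell|\,\mathds 1_{\mathcal K_\varepsilon}\,\big]+\mathbb E^{\mathbb Q}\big[\,|G|\,\mathds 1_{\mathcal K_\varepsilon^c}\,\big],
\end{equation*}
so it suffices to bound the two summands by $\varepsilon/2$ uniformly in $\mathbb Q$. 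The compact set is supplied by tightness (Assumption \ref{assumption}(4)): for any $\delta>0$ there is a compact $\mathcal K\subset\widehat\Omega$ with $\mathbb Q(\mathcal K^c)<\delta$ for every $\mathbb Q\in\mathcal M$, which already gives conclusion (1). The real issue is that both summands must be controlled \emph{simultaneously} over the whole of $\mathcal M$.

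For the tail term I would prove a uniform integrability estimate, namely $\sup_{\mathbb Q\in\mathcal M}\mathbb E^{\mathbb Q}[\,|G|\,\mathds 1_{\mathcal K^c}\,]\to 0$ as $\delta\to 0$. In the bounded case this is immediate from $\mathbb E^{\mathbb Q}[\,|G|\,\mathds 1_{\mathcal K^c}\,]\le\lVert G\rVert_{L^\infty}\,\mathbb Q(\mathcal K^c)\le\lVert G\rVert_{L^\infty}\,\delta$. In the almost attainable case I would pick an attainable $H=V_h\in\mathcal A$ with $\lVert G-H\rVert_{L^\infty}<\varepsilon/4$, so that $\mathbb E^{\mathbb Q}[\,|G|\,\mathds 1_{\mathcal K^c}\,]\le\varepsilon/4+\mathbb E^{\mathbb Q}[\,|V_h|\,\mathds 1_{\mathcal K^c}\,]$, and control the remaining term using the structure of $V_h$: under any $\mathbb Q\in\mathcal M$ the discounted price is a martingale, hence $\mathbb E^{\mathbb Q}[(\theta\bullet X)_{\mathbb T}]=0$, and $\mathbb E^{\mathbb Q}[F]=\mathcal P(F)$ for each $F\in\mathfrak F$, so that $\mathbb E^{\mathbb Q}[V_h]=0$ for all $\mathbb Q$; together with tightness this delivers the uniform tail bound for the unbounded payoff.

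For the body term the two cases diverge. In the almost attainable case the argument is clean precisely because $\mathcal A_{\varepsilon/4}$ gives an $L^\infty$ bound: writing $H=V_h$ with $\lVert G-H\rVert_{L^\infty}<\varepsilon/4$ and noting that $V_h$ is continuous (its trading gains $(\theta\bullet X)_{\mathbb T}$ are continuous in $\widehat X$ and the $F\in\mathfrak F$ are continuous payoffs), I would apply Corollary \ref{cor:payoffs local} to the \emph{continuous} payoff $H$ to obtain, on a possibly enlarged compact set $\mathcal K_\varepsilon$, a signature payoff $\mathcal S^\ell$ with $|H-\mathcal S^\ell|<\varepsilon/4$ on $\mathcal K_\varepsilon$; then $\mathbb E^{\mathbb Q}[\,|G-\mathcal S^\ell|\,\mathds 1_{\mathcal K_\varepsilon}\,]\le\varepsilon/4+\varepsilon/4$ uniformly in $\mathbb Q$. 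In the bounded case $G$ is only Borel measurable, so before invoking Corollary \ref{cor:payoffs local} (which needs continuity) I must first approximate $G$ by a continuous payoff $g$ with $\sup_{\mathbb Q\in\mathcal M}\mathbb E^{\mathbb Q}[\,|G-g|\,\mathds 1_{\mathcal K_\varepsilon}\,]$ small.

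This last step is the main obstacle I anticipate: a single continuous surrogate $g$ must approximate $G$ in $L^1$ \emph{uniformly} across all risk-neutral measures, and uniform $L^1$ approximation of a discontinuous function can fail if mass concentrates near the discontinuities. The point is that the martingale constraint defining $\mathcal M$ restricts how such measures can concentrate, and tightness confines them to one compact set; exploiting these two facts to push the single-measure density of continuous functions in $L^1$ through to uniformity over $\mathcal M$ is where the work lies. Everywhere else, tightness is the engine that converts the pointwise approximation of Proposition \ref{prop:payoffs local} and Corollary \ref{cor:payoffs local} into the uniform statement required by conclusion (2).
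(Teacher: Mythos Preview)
Your decomposition and overall strategy match the paper's proof closely: split $\mathbb E^{\mathbb Q}[|G-\mathcal L|]$ into a body term on $\mathcal K_\varepsilon$ and a tail term on $\mathcal K_\varepsilon^c$, use tightness plus Corollary~\ref{cor:payoffs local} for the body, and treat the bounded and almost-attainable cases separately for the tail. The paper is in fact terser than you are---it opens with ``Substitute $G$ with a continuous payoff that is close to $G$ in $L^1$ if necessary'' and never revisits the uniformity-in-$\mathbb Q$ issue you correctly flag as the main obstacle in the bounded case; your concern there is well placed and the paper does not resolve it either.

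One step in your plan does not work as written. In the almost-attainable tail estimate you propose to control $\mathbb E^{\mathbb Q}[\,|V_h|\,\mathds 1_{\mathcal K^c}\,]$ by invoking $\mathbb E^{\mathbb Q}[V_h]=0$ together with tightness. Mean zero is not enough: it bounds $\big|\mathbb E^{\mathbb Q}[V_h\,\mathds 1_{\mathcal K^c}]\big|$ (since $\mathbb E^{\mathbb Q}[V_h\,\mathds 1_{\mathcal K^c}]=-\mathbb E^{\mathbb Q}[V_h\,\mathds 1_{\mathcal K}]$ and $V_h$ is bounded on the compact $\mathcal K$), but not $\mathbb E^{\mathbb Q}[\,|V_h|\,\mathds 1_{\mathcal K^c}\,]$. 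What you actually need is uniform integrability of $V_h$ over $\mathcal M$, and the martingale identity does not deliver that by itself. The paper's proof sidesteps this entirely by simply \emph{asserting} the existence of a compact $\mathcal K_2$ with $\sup_{\mathbb Q\in\mathcal M}\mathbb E^{\mathbb Q}[\,|H|\,;\,\mathcal K_2^c\,]<\varepsilon/4$ for the attainable $H$, with no justification---so the paper is no more rigorous here, but your proposed mechanism is explicitly insufficient. (Note also that the payoffs $F\in\mathfrak F$ are only assumed Borel measurable, so $V_h$ need not be continuous as you claim.) An honest route would require an a priori uniform moment bound on $V_h$ across $\mathcal M$, for instance via an $L^2$ estimate on the discrete stochastic integral together with a uniform control on quadratic variation.
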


\subsection{The implied expected signature}\label{subsec:implied}

Let $\varepsilon > 0$. By Proposition \ref{prop:approximate price}, we may pick a compact set $\mathcal K_\varepsilon\subset \widehat \Omega$ and a family of signature payoffs $\mathscr L = \{ \mathcal S^{\ell_F}\}_{F\in \mathfrak F}$ with $\ell_F \in T((\mathbb R^2\oplus \mathbb R)^\ast)$, such that:

\begin{enumerate}
\item $\mathbb Q(\widehat X\in \mathcal{K}_\varepsilon) > 1 - \varepsilon$ for all $\mathbb Q\in \mathcal M$,
\item $\mathcal P(|F-\mathds{1}_{\mathcal K_\varepsilon}\mathcal S^{\ell_F}|)<\varepsilon$ for all $F\in \mathfrak F$.
\end{enumerate}

Fix a risk-neutral measure $\mathbb Q\in \mathcal M$ now. We then have:

\begin{equation}\label{eq:price approximation}
\langle \ell_F, \mathbb E^{\mathbb Q}[\Sig(\widehat X)_{0,1} \;;\; \mathcal K_\varepsilon] \rangle \approx \mathcal P(F)\quad \forall F\in \mathfrak F,
\end{equation}
where $\mathbb E^{\mathbb Q}[\Sig(\widehat X)_{0,1} \;;\; \mathcal K_\varepsilon]$ denotes $\mathbb E^{\mathbb Q}[\Sig(\widehat X)_{0,1} \mathds 1_{\mathcal K_\varepsilon}]$. Choose $N\in \mathbb N$ sufficiently large so that $\ell_F\in T^N(\mathbb R^2\oplus \mathbb R)$ for all $F\in \mathfrak F$, which is possible because $\mathfrak F$ is finite and by definition of tensor algebra \eqref{eq:ta}. Then, \eqref{eq:price approximation} becomes:

\begin{equation}\label{eq:finite price approximation}
\langle \ell_F, \mathbb E^{\mathbb Q}[\Sig^N(\widehat X)_{0,1} \;;\; \mathcal K_\varepsilon] \rangle \approx \mathcal P(F)\quad \forall F\in \mathfrak F.
\end{equation} If the risk-neutral measures are unknown and we only observe market prices for the exotic payoffs $\mathfrak F$, we may look for the expected signature that best matches the observed prices -- i.e. the best expected signature fit for the approximation \eqref{eq:finite price approximation}. This can be achieved using linear regression. The complexity of the estimation will, in general, depend on the dimension of $T^N(\mathbb R^2 \oplus \mathbb R)$ -- which is $(3^{N+1} - 1)/2$ -- and the number of exotics options $|\mathfrak{F}|$. Notice, in particular, that the complexity grows exponentially with $N$. In the experiments in Section \ref{sec:numerical experiment}, however, it sufficed to set $N=5$.

The extent to which we can accurately estimate the expected signature depends on the invertibility  of the linear map \begin{align}\label{eq:linear map invert}
T^N(\mathbb R^2\oplus \mathbb R) &\to \mathbb R^{|\mathfrak F|}\notag \\
\mathbf a &\mapsto (\langle \ell_F, \mathbf a\rangle)_{F\in \mathfrak F}.
\end{align}
If the family of exotic payoffs $\mathfrak F$ is rich enough (i.e. the linear map \eqref{eq:linear map invert} has high rank) we may be able to estimate the expected signature that best matches the observed prices for such exotic payoffs. We call this expected signature the \textit{implied expected signature}, in analogy with the concept of \textit{implied volatility}.

If we are given a (potentially illiquid) payoff $G:\widehat \Omega \to \mathbb R$ that is close to the space of attainable payoffs $\mathcal A$ (Definition \ref{def:hedging strategy}) and that satisfies the assumptions of Proposition \ref{prop:approximate price}, we may estimate its price by $\mathcal P (G) \approx \langle \ell_G, \mathbf E\rangle$ where $\ell_G$ is given by Proposition \ref{prop:approximate price} and $\mathbf E$ is the implied expected signature that was estimated from market data.

This approach provides a numerical method to estimate prices of certain exotic payoffs from observed prices of other exotics in a model-free manner. The effectiveness and accuracy of the method depends on the richness of the payoffs $\mathfrak F$ with observable prices.

\begin{example}
In what follows, we discuss a couple of examples to provide intuition of the meaning of some of the terms of the implied expected signature.

Let $\widehat X=(X^b, X^f)\in \widehat \Omega$ and write $\widehat X_t = ((\widehat X_t^1, \widehat X_t^2), \widehat X_t^3)\in \mathbb R^2\oplus \mathbb R$. By the definition of the iterated integrals \eqref{eq:iterated integral} that form the signature, the first term of the signature is given by:
$$\langle e_2^\ast,\, \Sig(\widehat X)_{0,1}\rangle \stackrel{\mathclap{\normalfont\small{\text{def}}}}{=} \int_0^1 d\widehat X_t^{1}=X_1-X_0.$$
In other words, the first term of the signature is given by the increment of the price path. Hence, the first term of the implied expected signature $\langle e_2^\ast, \mathbf E\rangle$ is an estimate of the forward price of the underlying asset.

For the second term of the signature, the lead-lag transformation $\widehat X$ has the property that
$$\left \langle e_3^\ast \otimes e_2^\ast - e_2^\ast \otimes e_3^\ast, \Sig(\widehat X)_{0,1}\right \rangle \stackrel{\mathclap{\normalfont\small{\text{def}}}}{=}   \int_0^1\int_0^t d\widehat X^3_s d\widehat X^2_t - \int_0^1\int_0^t d\widehat X^2_s d\widehat X^3_t  = \langle X \rangle,$$
where $\langle X \rangle=\sum_{i=1}^n (X_{t_{i-1}} - X_{t_i})^2$ is the quadratic variation -- or volatility -- of the price path. Therefore, the second term of the implied expected signature $\langle e_3^\ast \otimes e_2^\ast - e_2^\ast \otimes e_3^\ast, \mathbf E\rangle$ is an estimate of the expected volatility of the market.

In a nutshell, the first two order terms of the implied expected signature contain information about the forward price and expected volatility of the market. Higher order terms capture other aspects of the market dynamics. In fact, the full implied expected signature (when $N\to \infty$) characterises the market dynamics, see \cite{ilya}.
\end{example}

\section{Numerical experiment}\label{sec:numerical experiment}

\subsection{Approximating payoffs by signature payoffs}\label{subsec:approx}

In this section we discuss how to approximate a payoff $G:\widehat \Omega \to \mathbb R$ by a signature payoff $\mathcal S^{\ell}:\widehat \Omega\to \mathbb R$.

We begin by choosing a dataset of paths $\{\widehat X^{(i)}\}_{i=1}^n\subset \widehat \Omega$. In the experiments in this section, we generated 10,000 paths from a Black--Scholes model with a constant volatility ranging from 5\% to 40\% -- as we will see, the approximating signature payoff generalised well even to non-Black--Scholes paths.

Once the set of paths is chosen, we compute the signatures $\{\Sig^N(\widehat X^{(i)})_{0, 1}\}_{i=1}^n$ of order $N\in \mathbb N$. In the following experiments we fixed $N=5$. There are multiple publicly available, open source libraries to compute signatures, such as \texttt{esig}\footnote{\url{https://pypi.org/project/esig/}} or \texttt{iisignature}\footnote{\url{https://github.com/bottler/iisignature}}.

Finally, one can apply a linear regression of the signatures $\{\Sig^N(\widehat X^{(i)})_{0, 1}\}_{i=1}^n$ against the payoff cash flows $\{G(\widehat X^{(i)})\}_{i=1}^n$ to find a linear functional $\ell \in T^N(\mathbb R^2\oplus \mathbb R)$ such that
$$\langle \ell, \Sig^N(\widehat X^{(i)})_{0,1}\rangle \approx G(\widehat X^{(i)}) \quad \forall \, i=1, \ldots, n.$$

\subsection{Pricing from market data}\label{subsec:pricing experiment}

\begin{figure}
\centering
\begin{subfigure}[b]{0.9\textwidth}
   \includegraphics[width=1\linewidth]{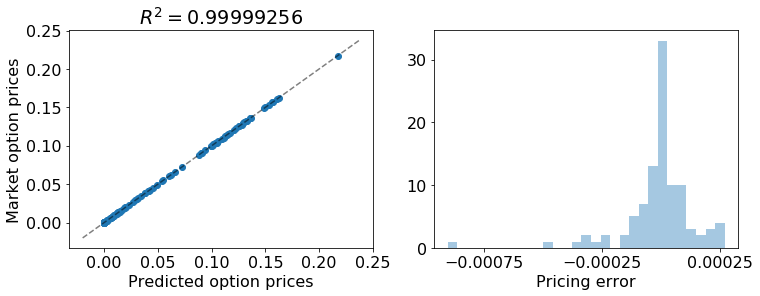}
   \caption{Hull--White model.}
   \vspace{0.5cm}
\end{subfigure}

\begin{subfigure}[b]{0.9\textwidth}
   \includegraphics[width=1\linewidth]{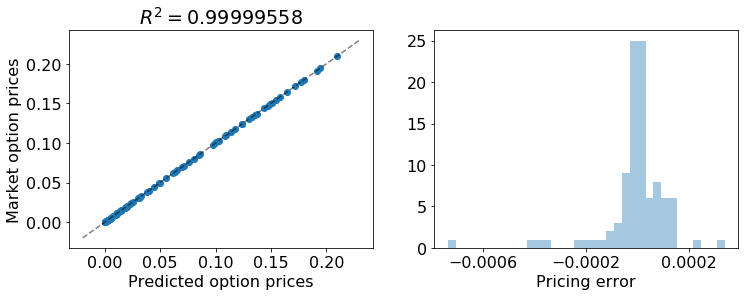}
   \caption{GARCH model.}
   \vspace{0.5cm}
\end{subfigure}

\begin{subfigure}[b]{0.9\textwidth}
   \includegraphics[width=1\linewidth]{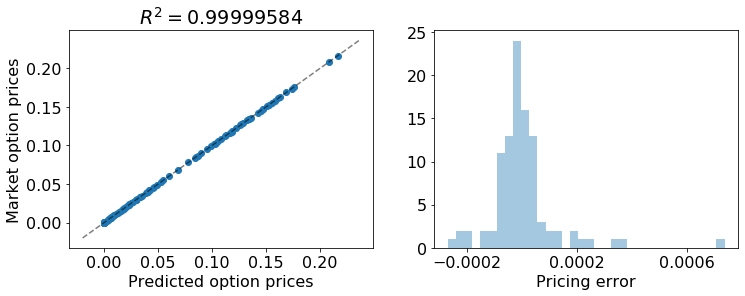}
   \caption{Rough volatility model.}
   \vspace{0.5cm}
\end{subfigure}

\caption{Accuracy of the estimated market prices for different market dynamics. The scatter plot on the left shows the predicted option price (horizontal axis) against the actual market price (vertical axis). The histogram on the right shows the pricing error.}
\label{fig:pricing}
\end{figure}

%\begin{figure}
%\centering
%\includegraphics[width=.8\linewidth]{figures/pricing}
%\caption{Predicted market prices using the implied expected signature, and the real market prices. The predictions are quite accurate, with an $R^2$ of 0.9999856.}
%\label{fig:pricing}
%\end{figure}

In this section we implement the proposed approach to estimate the implied expected signature from a family of exotic options $\mathfrak F$, which we employ to price other exotic options. The family $\mathfrak F$ of exotic options available to buy at $t=0$ (see Assumption \ref{assumption}) that was considered consists of 100 exotic and vanilla payoffs with maturity 1 year. More specifically, we consider European options, barrier up-and-out options, barrier up-and-in options and variance options (25 payoffs of each type). The size of the dataset, as well as the payoff types, is similar to that offered by consensus market price providers. The discrete timeline $\mathbb T$ we consider is a timeline consisting of all trading days in that year.

To generate prices of $\mathfrak F$, we considered prices generated by three stochastic volatility models: Hull--White model \cite{hullwhite}, GARCH model \cite{garch1, garch2} and the rough volatility model \cite{roughvol}. In all cases the market dynamics are unknown to the trader because the trader has access to the prices of exotics in $\mathfrak F$ but has no knowledge about the market dynamics that generated those prices. Therefore, the only information the trader may use is the market prices of $\mathfrak F$.

We used the procedure described in Section \ref{subsec:approx} to approximate each payoff in $\mathfrak F$ by a signature payoff, and then we followed the numerical method proposed in Section \ref{subsec:implied} to estimate the implied expected signature. Recall that signatures of order $N=5$ are considered.

We generated prices of 100 new exotic derivatives (different from those in $\mathfrak F$) using each of the three models (i.e. Hull--White model, GARCH and rough volatility). We used the implied expected signature to predict the prices of these 100 new exotic payoffs, for which we obtain accurate results for all models -- see Figure \ref{fig:pricing}. The $R^2$ of the predictions is very high for the three models.

\begin{figure}
    \centering
    \includegraphics[width=0.45\linewidth]{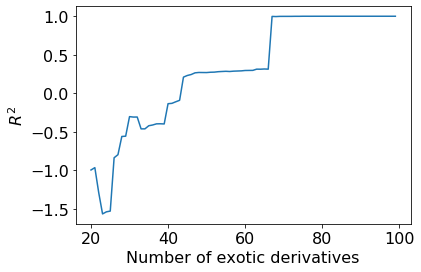}
    \includegraphics[width=0.45\linewidth]{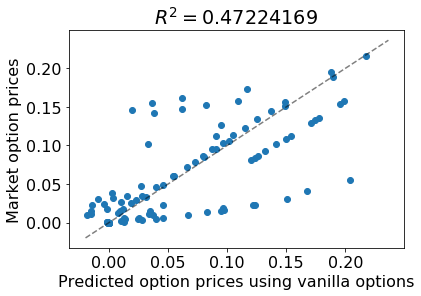}
    \caption{On the left, $R^2$ of the predicted market prices of the exotic functions, as a function of the size of available market prices $|\mathfrak F|$. On the right, the predicted market prices of exotic options (horizontal axis) against the actual market price (vertical axis) when $\mathfrak F$ is only formed by vanilla options. The $R^2$ is very poor when only vanilla options are used. Both plots were obtained for prices computed under the Hull--White model.}
    \label{fig:richness of F}
\end{figure}

The effectiveness of the methodology depends on how rich the class of exotic payoffs $\mathfrak F$ is, as discussed in Section \ref{subsec:implied}. For instance, the larger the family of exotics $\mathfrak F$ is, the more accurate the estimation of the implied expected signature is, which leads to more accurate market price predictions. This is shown on the first figure of Figure \ref{fig:richness of F}, where the $R^2$ of the predicted market prices of exotic payoffs is shown as a function of the size of the available exotic payoffs $\mathfrak F$. As we see, the performance is not very good when the size of $\mathfrak F$ is small, but it rapidly improves as the size of $\mathfrak F$ (i.e. the set of traded derivatives) increases. On the other hand, the types of exotic payoffs that are considered also matters. For instance, if $\mathfrak{F}$ only includes vanilla options we have information about the marginal distribution of the price path, but it will not, in general, have information about the transition probabilities of prices. Therefore, vanilla options are not sufficient to price other exotic options. This is illustrated on the second figure of Figure \ref{fig:richness of F}, where the accuracy of predicted market prices of exotic payoffs is shown in the case where $\mathfrak F$ only contains vanilla options.

Table \ref{table:table1} includes the accuracy of the predicted prices, measured with different metrics, for different choices of $\mathfrak{F}$. As the table indicates, a large family of payoffs $\mathfrak{F}$ that includes vanilla and exotic options leads to the best performance.

\begin{table}[]
\centering
\scriptsize
\begin{tabular}{l|c|ccc}
\hline
\hline
  \noalign{\vskip 2mm}
           & \textbf{Vanillas only in $\mathfrak{F}$} & \multicolumn{3}{c}{\textbf{Vanilla and exotic options in $\mathfrak F$}} \\  
 & $|\mathfrak F|=100$ &  $|\mathfrak F|= 33$   &  $|\mathfrak F|=66$   &   $|\mathfrak F|=100$   \\
           
\noalign{\vskip 2mm}
\hline
  \noalign{\vskip 2mm}
\textbf{$R^2$ of predictions} & 0.472242 &   -0.461247  & 0.313326  & \textbf{0.999993}     \\
\textbf{Mean-squared error} & $1.88\times 10^{-3}$ & $1.81 \times 10^{-2}$     & $1.54\times 10^{-6}$     & $\mathbf{1.48\times 10^{-8}}$     \\
\textbf{Mean absolute error} & $2.94\times 10^{-2}$ & $7.89 \times 10^{-2}$     & $6.04 \times 10^{-4}$     & $\mathbf{3.39 \times 10^{-5}}$     \\   \noalign{\vskip 2mm}
\hline\hline
\end{tabular}
\caption{Comparison between different choices of available payoffs $\mathfrak{F}$. In the first column only 100 vanilla options are in $\mathfrak{F}$ and in the second column, $\mathfrak{F}$ consists of both vanilla and exotic options (barrier up-and-out options, barrier up-and-in options and variance option). The table suggests that a large family of payoffs $\mathfrak{F}$ that includes exotic options leads to the best performance.} \label{table:table1}
\end{table}

\section{Conclusion}

In this paper we proposed a numerical method for estimating prices of exotic derivatives from available market prices of other exotic derivatives. We do so in a model-free setting, where only minimal assumptions about the market are made.

Our methodology extends the ideas from the contemporaneous work in \cite{sina}. We show that from market prices of exotic derivatives we can estimate a certain non-observable quantity called the \textit{implied expected signature}. This object consists of a collection of statistics that characterise the market dynamics, and it can be used to numerically estimate market prices of other exotic derivatives.

We believe there are several directions that this paper can be extended in. For instance, core ideas from the paper could be applied to risk management of large portfolios of exotic derivatives: portfolio optimisation, value at risk (VaR) calculation, etc. Further research in this direction may lead to new tools in these fields.

\appendix
\section{Proofs}\label{appendix:proofs}

\begin{proof}[Proof of Proposition \ref{prop:payoffs local}]
First, we show that the class of all signature payoffs forms an algebra. Let $\ell_1,\ell_2\in T((\mathbb R^2\oplus \mathbb R)^\ast)$, and consider the corresponding signature payoffs $\mathcal S^{\ell_1}, \mathcal S^{\ell_2}:\widehat \Omega\to \mathbb R$. It is clear that $\mathcal S^{\ell_1} + \mathcal S^{\ell_2}=\mathcal S^{\ell_1+\ell_2}$ -- in other words, the sum of two signature payoffs is a signature payoff. Moreover, by the shuffle product property (see Proposition \ref{prop:shuffle}), we have:
$$\mathcal S^{\ell_1}\cdot \mathcal S^{\ell_2} = \mathcal S^{\ell_1\shuffle\ell_2}.$$
That is, the product of two signature payoffs is another signature payoff, and the linear functional is given by the shuffle product of the linear functionals of the original signature payoffs. Therefore, the space of signature payoffs forms an algebra.

By Lemma \ref{lemma:uniqueness}, the signature map $\widehat \Omega \to T((\mathbb R^2\oplus \mathbb R))$ is injective. Therefore, given $\widehat X_1,\widehat X_2\in \widehat \Omega$ distinct, we have $\Sig(\widehat X_1) \neq \Sig(\widehat X_2)$. It follows immediately that there exists a signature payoff $\mathcal S^\ell$, with $\ell \in T((\mathbb R^2\oplus \mathbb R)^\ast)$, such that $\mathcal S^\ell (\widehat X_1) \neq \mathcal S^\ell (\widehat X_2)$. Therefore, signature payoffs separate points.

Given that the space of signature payoffs trivially contain constants, we conclude by the Stone--Weierstrass theorem that given any continuous payoff $G:\widehat \Omega\to \mathbb R$ and a compact set $\mathcal K\subset \widehat \Omega$, there exists a signature payoff $\mathcal S^\ell$ with $\ell\in T((\mathbb R^2\oplus \mathbb R)^\ast)$ such that $|G(\widehat X) - \mathcal S^\ell (\widehat X)|<\varepsilon$ for all $\widehat X\in \mathcal K$.
\end{proof}

\begin{proof}[Proof of Corollary \ref{cor:payoffs local}]
By tightness of $\mathcal M$, we pick a compact set $\mathcal K_\varepsilon \subset \widehat \Omega$ such that $\mathbb Q(\widehat X\in \mathcal K_\varepsilon) > 1-\varepsilon$ for all risk-neutral measures $\mathbb Q\in \mathcal M$. Apply Proposition \ref{prop:payoffs local} with the payoff $G$ and compact $\mathcal K_\varepsilon$, and conclude that $\lVert G - \mathcal S^\ell \rVert_{L^\infty(\mathcal K_\varepsilon)}<\varepsilon$.
\end{proof}

\vspace{0.2cm}

\begin{proof}[Proof of Proposition \ref{prop:approximate price}]
Substitute $G$ with a continuous payoff that is close to $G$ in $L^1$ if necessary, and assume, without loss of generality, that $G$ is continuous.

By Corollary \ref{cor:payoffs local}, choose a compact set $\mathcal{K}_1\subset \widehat \Omega$ such that $\mathbb Q(\widehat X\in \mathcal K_1) > 1-\varepsilon$ for all $\mathbb Q\in \mathcal M$ and  $\lVert G - \mathcal S^\ell \rVert_{L^\infty(\mathcal K_1)}<\varepsilon/2$.

Moreover, pick $\mathcal K_2\subset \widehat \Omega$ compact such that $\mathbb E^{\mathbb Q}[|G(\widehat X)| \; ; \; \mathcal K_2^c|] < \varepsilon/2$. Indeed:

\begin{enumerate}
\item If $G$ is bounded, then by tightness of $\mathcal M$ we may pick a compact set $\mathcal K_2\subset \widehat \Omega$ such that $\mathbb E^{\mathbb Q}[|G(\widehat X)| \; ; \; \mathcal K_2^c|] < \varepsilon/2$ for all $\mathbb Q\in \mathcal M$. 
\item Otherwise, if $G\in \mathcal A_{\varepsilon/4}$, there exists by definition an attainable payoff \linebreak$H\in \mathcal A$ such that $\lVert H-G\rVert_{L^\infty(\widehat \Omega)}<\varepsilon/4$. Let $\mathcal K_2 \subset \widehat \Omega$ be a compact set such that $\mathbb E^{\mathbb Q}[|H(\widehat X)| \; ; \; \mathcal K_2^c|] < \varepsilon/4$ for all $\mathbb Q\in \mathcal M$. Then, $\mathbb E^{\mathbb Q}[|G(\widehat X)| \; ; \; \mathcal K_2^c|] < \varepsilon/2$ for all $\mathbb Q\in \mathcal M$.
\end{enumerate}

Set $\mathcal K_\varepsilon := \mathcal K_\varepsilon$. Then, $\mathbb Q(\widehat X\in \mathcal K_\varepsilon)>1-\varepsilon$ and

\begin{align*}
\mathcal P(|G-\mathcal L|) &= \sup_{\mathbb Q\in \mathcal M} \mathbb E^{\mathbb Q}[|G(\widehat X) - \mathcal L(\widehat X)] = \sup_{\mathbb Q\in \mathcal M} (\mathbb E^{\mathbb Q}[|G(\widehat X)| \;;\; \mathcal K_\varepsilon^c] + \mathbb E^{\mathbb Q}[|G(\widehat X) - \mathcal S^{\ell}(\widehat X)| \;;\; \mathcal K_\varepsilon])\\
&<\varepsilon.
\end{align*}

\end{proof}

\section*{Disclosure statement}

Opinions and estimates constitute our judgement as of the date of this Material, are for informational purposes only and are subject to change without notice. This Material is not the product of J.P. Morgans Research Department and therefore, has not been prepared in accordance with legal requirements to promote the independence of research, including but not limited to, the prohibition on the dealing ahead of the dissemination of
investment research. This Material is not intended as research, a recommendation, advice, offer or solicitation for the purchase or sale of any financial product or service, or to be used in any way for evaluating the merits of participating in any transaction. It is not a research report and is not intended as such. Past performance is not indicative of future results. Please consult your own advisors regarding legal, tax, accounting or any other aspects including suitability implications for your particular circumstances. J.P. Morgan disclaims any responsibility or liability whatsoever for the quality, accuracy or completeness of the information herein, and for any reliance on, or use of this material in any way.

Important disclosures at: \url{www.jpmorgan.com/disclosures}.

\end{document}